\author{Vasileios Nakos\thanks{Saarland University and Max Planck Institute for Informatics, Saarland Informatics Campus, Saarbr\"ucken, Germany. \texttt{vnakos@mpi-inf.mpg.de}.  This work is part of the project TIPEA that has received funding from the European Research Council (ERC) under the European Unions Horizon 2020 research and innovation programme (grant
agreement No. 850979). Part of this work was done while the author was a ph.D. student at Harvard University and supported by NSF CAREER award CCF-1350670.}}
\affil{\texttt{vnakos@mpi-inf.mpg.de,}\\ Saarland University and Max Planck Institute for Informatics.}
\date{}
\title{Nearly Optimal Sparse Polynomial Multiplication}
\newtheorem{theorem}{Theorem}[section]
\newtheorem{lemma}[theorem]{Lemma}
\newtheorem{definition}[theorem]{Definition}
\newcommand{\wt}{\widetilde}
\renewcommand{\varepsilon}{\epsilon}
\renewcommand{\tilde}{\wt}
\newcommand*{\RN}[1]{\expandafter\@slowromancap\romannumeral #1@}
\newcommand{\define}[4][ignore]{%
  \ifstrequal{#1}{ignore}{}{
  \@namedef{thmtitle@#2}{#1}}%
  \@namedef{thm@#2}{#4}%
  \@namedef{thmtypen@#2}{lemma}%
  \newtheorem{thmtype@#2}[theorem]{#3}%
  \newtheorem*{thmtypealt@#2}{#3~\ref{#2}}%
}
\newcommand{\state}[1]{%
  \@namedef{curthm}{#1}
  \@ifundefined{thmtitle@#1}{
  \begin{thmtype@#1}
    }{
  \begin{thmtype@#1}[\@nameuse{thmtitle@#1}]
  }
    \label{#1}
    \@nameuse{thm@#1}
  \end{thmtype@#1}
  \@ifundefined{thmdone@#1}{
  \@namedef{thmdone@#1}{stated}%
  }{}
}
\newcommand{\restate}[1]{%
  \@namedef{curthm}{#1}
  \@ifundefined{thmtitle@#1}{
    \begin{thmtypealt@#1}
    }{
  \begin{thmtypealt@#1}[\@nameuse{thmtitle@#1}]
  }
    \@nameuse{thm@#1}
  \end{thmtypealt@#1}
  \@ifundefined{thmdone@#1}{
  \@namedef{thmdone@#1}{stated}%
  }{}
}
\newcommand{\thmlabel}[1]{
  \@ifundefined{thmdone@\@nameuse{curthm}}{\label{#1}
    }{\tag*{\eqref{#1}}}
}
\begin{document}
%\linenumbers

\begin{titlepage}
  \maketitle
  \begin{abstract}

%We give a clean, nearly optimal algorithm to multiply two sparse polynomials; the running time depends linearly on the size of the input plus the size of the output.
In the sparse polynomial multiplication problem, one is asked to multiply two sparse polynomials $f$ and $g$ in time that is proportional to the size of the input plus the size of the output. The polynomials are given via lists, $F$ and $G$, of their coefficients. Cole and Hariharan (STOC 02) have given a nearly optimal algorithm when the coefficients are positive, and Arnold and Roche (ISSAC 15) devised an algorithm running in time proportional to the ``structural sparsity'' of the product, i.e. the set $\mathrm{supp}(F) + \mathrm{supp}(G)$. The latter algorithm is particularly efficient when there are not ''too many cancellations'' of coefficients in the product. In this work we give a clean, nearly optimal algorithm for the sparse polynomial multiplication problem.%, resolving an open question posed by Roche (ISSAC 18). %Our approach naturally extends to multivatiate polynomials with a linear dependence on the number of variables.

  \end{abstract}
  \thispagestyle{empty}
\end{titlepage}

\section{Introduction}

Multiplying two polynomials is a fundamental computational primitive, with multiple applications in computer science. Using the Fast Fourier Transform, one can perform multiplication of polynomials stored as vectors of floating point numbers in time $O(n \log n)$, where $n$ is a bound on the largest degree.

An important and natural question is whether, and under which circumstances, a faster algorithm can be invented. Researchers have tried to obtain algorithms that beat the $O(n \log n)$-time bound, when the two polynomials are sparse, i.e. the number of non-zero terms in each polynomial is at most $k$. Interestingly, some ideas from the relevant literature have found applications in computer algebra packages such as Maple, Mathematica and Singular, including ways to represent and store polynomials \cite{maza2001sparse,monagan2014poly,monagan2015design,groves2016sparse}. 

When two polynomials have at most $s$ coefficients, the trivial algorithm gives $O(s^2\log n \log s)$ time, which is already and improvement for $s \leq \sqrt{n/ \log n}$. With the rise of big data and the exploitation of inherent sparsity in the data sets by researchers from compressed sensing \cite{crt06, crt06a, ct06, hikp12a} and machine learning, the question of whether two sparse polynomials can be rapidly multiplied becomes a fundamental and natural question. The desirable  goal is to obtain an algorithm that is output-sensitive, i.e. runs in nearly linear time with respect to $k+s$, where $k$ is the number of non-zero coefficients in the product. A result of Cole and Hariharan \cite{cole2002verifying} obtains an algorithm that runs in $O( k \log ^2 n)$ time, when the coefficients of the two polynomials are non-negative. A data structure for carefully allocating and de-allocating memory has been designed in \cite{yan1998geobucket}, trying to tackle the problem of memory handling can be the main bottleneck in complexity of sparse multiplication in practical scenarios. The aforementioned algorithm is based on a heap, an idea which also lead to implementations developed in \cite{monagan2007polynomial,monagan2009parallel,monagan2014poly,monagan2015design}. The authors in \cite{monagan2009parallel} develop a parallel algorithm for multiplying sparse distributed polynomials, where each core uses a heap of pointers to multiply parts of polynomials, exploiting its L3 cache. The same authors in \cite{monagan2014poly} have created a data structure suitable for the Maple kernel, that allows for obtains significant performance in many Maple library routines.

When the support of the product is known or structured, work in \cite{roche2008adaptive,roche2011chunky,van2012complexity,van2013bit} indicates how to perform the multiplication fast. Using techniques from spare interpolation, Arnold and Roche \cite{arnold2015output} have given an algorithm that runs in time that is nearly linear in the ``structural sparsity'' of the product, i.e. the sumset of the supports of the two polynomials. When there are not ``too many'' cancellations, this is roughly the same as the size of the support of the product, and the above algorithm is quite efficient. However, in the presence of a considerable amount of cancellations in the product, the aforementioned algorithm becomes sub-optimal. Removing this obstacle seems to be the final step, and has been posed as an open problem in the excellent survey of \cite{roche2018can}. 

In this paper, we resolve the aforementioned open question, giving an algorithm that is near-linear in the size of the input plus the size of the output. This can be considered near-optimal. Moreover, assuming optimality of the Fast Fourier Transform, one cannot expect an algorithm faster than $O((k+s)\log (s+k))$, and our algorithm reaches quite close to this barrier. We note that one can use the big hammers of filter-based (sparse) Fourier sampling \cite{gms05,hikp12a,hikp12b,k16,k17} along with semi-equispaced Fourier transforms \cite{dutt1993fast} to obtain nearly optimal algorithms for sparse polynomial multiplication, but these algorithms are much more complicated and technical, and thus likely to be worse in a real-life scenario. Furthermore, puting all the pieces together is a non-trivial task for someone who is not an expert in Sparse Fourier transform, as how to solve sparse polynomial multiplication using those techniques has not been explicitly written down in the literature. In contrast, our algorithm is much less technically demanding and we feel accessible even to a graduate level student.

\section{Preliminaries}

We will be concerned with polynomials with integer coefficients. This suffices for most applications, since numbers in a machine are represented using floating point arithmetic. We denote by $\mathbb{Z}_n$ the ring of residue modulo $n$ and by $[n]$ the set $\{0, 1, \ldots n-1\}$. For a vector $x \in \mathbb{R}^n$ we define $\mathrm{supp}(x) = \{ i \in [n]: x_i \neq 0\}$ and $\|x\|_0 = |\mathrm{supp}(x)|$. All vectors in this paper are zero-indexed. 

%We define the non-cyclical convolution of two  vectors $x,y \in \mathbb{R}^n$ as the $n$-dimensional vector $x\star y$ such that
%\[(x \star y)_i = \sum_{j,j'\in [n] \times [n]: j+j'=i} x_j y_{j'}.\] 

We define the cyclical convolution of two vectors $x,y \in \mathbb{R}^n$ as the $n$-dimensional vector $x\star y$ such that 
\[(x \star y)_i = \sum_{j,j'\in [n]:  (j+j')~\mathrm{mod}~n =i} x_j y_{j'}.\] 

It is easy to see that cyclical convolution is immediately related with polynomial multiplication: if $f(z) = \sum_{j=0}^n \alpha_jz^j$ and $g(z) = \sum_{j=0}^n \beta_j z^j$, we have that the polynomial $(f \cdot g) (z) =  \sum_{j=0}^{2n} \gamma_j z^j$ satisfies $\gamma = \alpha \star \beta$, where $\gamma = (\gamma_0,\gamma_1,...\ldots,\gamma_N) \in \mathbb{Z}^{N}, \alpha = (\alpha_0,\alpha_1,\ldots,\alpha_n, 0, \ldots, 0) \in \mathbb{Z}^{N}, \beta = (\beta_0,\beta_1,\ldots,\beta_n, 0, \ldots, 0) \in \mathbb{Z}^{N}$, for $N = 2n$. 

Throughout the paper we assume that we work on a machine where the word size is $w = C \log n = \Theta( \log n)$ for some sufficiently large constant $C$, all coefficients of the polynomials fit in a word, and elementary operations between two integers given as part of the input can be done in $O(1)$ time. For a complex number $z$ we denote by $|z|$ its magnitude, and by $\mathrm{arg}(\phi)$ its phase. We also use $\tilde{O}(f)$ throughtout the paper to denote $ f \cdot \mathrm{poly}(\log f)$.

\section{Result}

The contribution of this work is the following.

\begin{theorem}\label{thm:sparse_multiplication_theorem}
Let $u,v \in \mathbb{Z}^n$, given as lists of their non-zero coordinates along with their values. Let $N = 2n$. Define $x = (u_0,u_1,\ldots,u_{n-1},0,\ldots,0 ) \in \mathbb{Z}^N$, as well as 
$y = (v_0,v_1,\ldots,v_{n-1},0,\ldots,0 ) \in \mathbb{Z}^N$. Let $s = \|x\|_0 + \|y\|_0$ (size of input), and $k = \|x \star y \|_0 +4$ (size of output). Then, with probability $99/100$, we can compute a list which contains the non-zero coefficients and values of $x \star y$, in time $O( (s+k) \cdot \mathrm{poly}(\log N))$. In particular, the running time is (ignoring constant factors)

\begin{align*}
 k \log^2 N \cdot \log( k \log^2 N) \cdot \log \log k + \\
s \cdot \log k \log N \cdot (\log \log k)^2  + \\
\tilde{O}(\log^3 N) \cdot \log \|x \star y\|_0.
\end{align*}

\end{theorem}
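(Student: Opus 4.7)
The plan is to reduce the problem to a small number of short cyclic FFT convolutions on hashed versions of $x,y$ modulo a random prime. For a prime $p$ define
\[u^{(p)}_j:=\sum_{i\equiv j\,(\mathrm{mod}\,p)}x_i,\qquad v^{(p)}_j:=\sum_{i\equiv j\,(\mathrm{mod}\,p)}y_i,\]
together with the weighted variants $\bar u^{(p)}_j:=\sum_{i\equiv j}i\, x_i$ and $\bar v^{(p)}_j:=\sum_{i\equiv j}i\, y_i$. From the sparse input lists all four are built in $O(s)$ time, and the length-$p$ cyclic convolutions $Z:=u^{(p)}\star_p v^{(p)}$ and $W:=\bar u^{(p)}\star_p v^{(p)}+u^{(p)}\star_p \bar v^{(p)}$ cost $O(p\log p)$ by FFT. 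The key identity is that bucket $j$ of $Z$ equals $\sum_{\ell\equiv j\,(\mathrm{mod}\,p)}(x\star y)_\ell$ while that of $W$ equals $\sum_{\ell\equiv j\,(\mathrm{mod}\,p)}\ell\,(x\star y)_\ell$; so whenever exactly one $\ell\in\supp(x\star y)$ hits bucket $j$ --- an \emph{isolated} bucket --- we recover $(\ell,(x\star y)_\ell)=(W_j/Z_j,\,Z_j)$ in $O(1)$.

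\textbf{Randomization and peeling.} Taking $p$ uniformly at random from primes in an interval $[P,2P]$, any two fixed distinct indices of $\supp(x\star y)$ are congruent modulo $p$ with probability $O(\log N/P)$ by the standard prime-divisor estimate. With $P$ roughly $k\log^2 N$ a constant fraction of indices are isolated with high probability, and I would \emph{peel}: extract all isolated pairs in the current round, subtract their known contributions from the hashed convolutions in time proportional only to the number extracted, and then re-sample a fresh prime for the remaining indices. Because each round isolates a constant fraction in expectation, $O(\log\log k)$ such outer rounds drive the residual support to zero. Since $k$ itself is unknown, the whole procedure runs inside an exponential doubling loop on a guess $K=2,4,8,\ldots$, incurring an extra $O(\log k)$ factor against $s$; this matches the middle term $s\log k\log N(\log\log k)^2$ in the statement, while the FFT work $p\log p$ per round accounts for the first term.

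\textbf{Isolation filter and verification.} A non-isolated bucket can accidentally yield a ``plausible'' $(\ell,v)$ --- an integer $\ell\in[N]$ with $\ell\bmod p=j$. I would suppress such false positives by drawing a second independent random prime $p'$, hashing once more, and keeping only candidates whose value and residue survive the second test; a union bound over at most $k$ surviving candidates shows the failure probability is $o(1)$. After peeling terminates, I would certify the final list by one more independent hashed convolution: subtract from it the claimed answer and accept only if the result is identically zero, which catches any residual error with probability $1-o(1)$ and accounts for the last term $\wt O(\log^3 N)\log\|x\star y\|_0$.

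\textbf{Main obstacle.} The delicate part is the bookkeeping that keeps the polylog factors in the stated budget: each unrecovered position must be touched only $O(\log\log k)$ times across peeling rounds, each recovered pair must be subtracted from future rounds in time proportional to its count (not to $p$ or $k$), and the doubling loop on $K$ must not stack extra factors on $s$ beyond one $\log k$ for the guess schedule and one $\log\log k$ for the peeling. Combined with the choice of $P\approx k\log^2 N$ and the fact that a fresh random prime per round gives independent isolation events, this orchestration, rather than any individual primitive, is what makes the bound $(s+k)\,\poly(\log N)$ attainable.
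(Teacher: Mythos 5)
Your core structure --- hash to $O(k\log^2 N)$ buckets via a random prime, exploit that a length-$p$ FFT on the folded inputs reproduces the folded product, recover isolated coordinates, peel, double on the unknown sparsity, and verify --- matches the paper. The genuinely different piece is the index-recovery primitive. The paper twists each $x_j$ by $\omega^j$ for a $(2N)$-th root of unity, so an isolated bucket yields $(x\star y)_{j^*}\omega^{j^*}$, and recovers $j^*$ by a ternary search on the phase (their Lemma~\ref{lem:binary_search}), which also forces them to reason about finite-precision rounding of roots of unity (Lemma~\ref{lem:approximate_root_unity}). You instead compute the zeroth and first moments $Z_j=\sum_{\ell\equiv j}(x\star y)_\ell$ and $W_j=\sum_{\ell\equiv j}\ell\,(x\star y)_\ell$ --- and your identity $W=\bar u^{(p)}\star_p v^{(p)}+u^{(p)}\star_p\bar v^{(p)}$ is correct, since $\ell(x\star y)_\ell=\sum_{a+b=\ell}(a+b)x_ay_b$ --- so an isolated bucket gives $(\ell,(x\star y)_\ell)=(W_j/Z_j,Z_j)$ by one exact integer division, with no complex arithmetic, no phase search, and no precision lemma. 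This is a clean and arguably simpler realization of the same step, and the integers involved ($i\,x_i$, bucket sums, FFT outputs) all stay within $\mathrm{poly}(N)$, i.e.\ $O(1)$ machine words, so the word-RAM accounting is unaffected.

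Two places in your sketch do not quite hold as written. First, you claim ``each round isolates a constant fraction in expectation'' and from that infer $O(\log\log k)$ outer rounds suffice; a constant per-round fraction gives only geometric decay and hence $\Theta(\log k)$ rounds. The correct justification for a $\log\log k$ round count with a \emph{fixed} prime size $p\approx k\log^2 N$ is that the collision count scales like $R^2/p$ when $R$ coordinates remain, so the surviving fraction improves from constant to $O(R/p)$ as $R$ shrinks, giving doubly-exponential decay; you need to say that, not ``constant fraction.'' (The paper instead takes $\Theta(\log B)$ rounds with geometrically \emph{shrinking} bucket counts $B_r=B2^{-r+1}$; either scheme yields the same total FFT work.) Second, your false-positive control --- one extra independent prime $p'$ and a ``union bound over at most $k$ candidates'' --- is underdeveloped: a true index that is isolated under $p$ may fail to be isolated under $p'$ and be wrongly discarded, so you must argue the joint isolation rate is still a constant, and you must also show that a spurious $(\ell,v)$ produced by a collision under $p$ is inconsistent with bucket $\ell\bmod p'$ with high probability. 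The paper sidesteps both issues by running $\Theta(\log(1/\delta_r))$ independent primes per round and majority-voting. These are fixable gaps, not fatal flaws, but they are exactly where your per-round failure probability and hence the $\log\log k$ factors in the stated bound would have to come from, so they are load-bearing.

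Finally, your certification step (``one more independent hashed convolution, accept if the residual hashes to zero'') costs $\Theta(p\log p)\approx\Theta(k\,\mathrm{polylog})$ per call and must be invoked at every level of the doubling loop, whereas the paper's Lemma~\ref{lem:fingerprinting} uses polynomial identity testing over $\mathbb{Z}_p$ and costs only $\widetilde O(\log^2 N)$ plus reading the input; this is where the third, $\widetilde O(\log^3 N)\log\|x\star y\|_0$ term in the theorem comes from, and your version would subsume it into the first term rather than match it.
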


\section{Overview of the Algorithm}

Our algorithm resembles the iterative framework employed in the renowned sublinear-time sparse recovery algorithm of \cite{glps12}. Similarly to that paper, our algorithm implements a routine which carefully hashes every coordinate $i \in [N]$ to an appropriate number of buckets, and from each bucket identifies the location and values of at least a constant fraction of the non-zero coordinates in $x \star y$. Let $r$ be the $k$-sparse vector obtained. Considering vector $(x\star y) - r$, we can perform a similar ``hash and identify'' procedure to peel of a constant fraction of the coordinates left. Repeating in $O( \log k)$ phases, we can guarantee that we've found all coordinates and cleaned up all mistakes introduced.

The main contribution of our work lies in effectively hashing the support of $x \star y$ and identifying the locations and the values of the non-zero coordinates, given access only to $x$ and $y$. Let us assume for now that $k$ is given to us as a promise. It is a folklore fact (see for example Section 8 in \cite{CL15}) that if one takes a random prime $p \in [10 k \log^2 N]$, folds $x$ by forming a $p$-length vector by summing all $x_j$ for $j$ which are equal modulo $p$, folds $y$ in the same way, and then computes the cyclical convolution of the folded vectors, then most coordinates in the support of $x \star y$ are isolated in the resulting vector. However, this trick only gives the values of the non-zero coordinates in $x \star y$, and it is unclear how to extract the corresponding indices. What we observe is that for any (complex) number $g$, the following thing holds. If we first multiply each $x_j$ by $g^j$ to obtain vector $x'$, then fold $x'$ as before, do the same thing with $y$, and then compute the cyclical convolution of the folded vectors, we can prove that the vector we obtain has in its $i$th entry the number

	\[	\sum_{j \in [N]: j~\mathrm{mod}~p = i}	(x \star y)_j \cdot g^j.		\]

The nice thing now is that if some $j \in \mathrm{supp}(x \star y)$ is isolated under the hash function $h(x) = x~\mathrm{mod}~p$, we will have $(x \star y)_j \cdot g^j$ in hand, rather than only $(x \star y)_j$ that we had before. From this we would like to infer both $(x \star y)_j$ and  $j$. It is easy to see that even for $(x \star y)_j = 1$ this is essentially a discrete logarithm problem. By choosing $g \neq 1$ to be a real number, this would result in $N$-digit numbers, which take too long to manipulate. The right solution is to choose $g = \omega$, where $\omega$ is an $(2\cdot N)$-th root of unity rounded to fit in a word. In this way, every $g^j$ can still be written down using $\Theta(\log n)$ digits, and we can infer integer $j \in [N]$ from $g^j$ by performing a ternary search over the corresponding quarter of the complex circle. Note that we chose $(2N)$-th root of unity in order to infer both $(x \star y)_j$ and $j$ from $(x \ast y)_j \cdot g^j$: if the phase of the aforementioned complex number is less than $ \frac{\pi}{2}$, then $(x \star y)_j$ is positive (and vice versa), otherwise it is negative. In the latter case, to read the correct location one needs to decrease the found $j$ by $N$.

Of course, the above discussion assumes that we know $k$. We can guess $k$ by doubling and invoking standard fingerprinting techniques. Putting carefully everything together results in the desired guarantee claimed in \ref{thm:sparse_multiplication_theorem}.

\section{Algorithm and Proof}

We proceed by building the tools needed for the proof of theorem \ref{thm:sparse_multiplication_theorem}. In what folows $\omega$ is an $(2N)$-th root of unity; our algorithm should treat it as rounded in order to fit in a word. We remind that our word size is $\Omega( \log n)$, which is the word size needed just to write down the largest exponent of the polynomial.

The following operator will be particularly important for our algorithm.
 \begin{definition}\label{def:operator} Let $x \in \mathbb{Z}^N$. For each $m$ define the function $h_m: [N] \rightarrow [m]$ by 
\[ h_{m}(i) = i ~\mathrm{mod}~ m.\] 

Moreover, define $\mathcal{P}_{m}(x) \in \mathbb{Z}^m$ to be such that

	\[	\left( \mathcal{P}_{m}(x) \right)_i = \sum_{j \in [N]: h_{m}(j) = i} x_j \cdot \omega^{ j}, \forall i \in [m].	\] 

We remind that $\omega$ is a $(2N)$-th root of unity rounded to fit in a word.

\end{definition}

The following claim lies at the crux of our argument.
\begin{lemma} \label{lem:prepare}
Given vectors $x,y,w \in \mathbb{Z}^N$ the vector $\mathcal{P}_{m}((x\star y) - w )$ up to $1/\mathrm{poly}(N)$ error can be computed in time \[O( \left( \|x\|_0 + \|y\|_0 + \|w\|_0\right) \log N + m \log m).\]
\end{lemma}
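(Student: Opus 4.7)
The plan is to exploit linearity together with the folding--convolution identity that underlies the usual ``hash by modular reduction'' trick, modified to carry the twiddle factor $\omega^j$.

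\textbf{Step 1 (Linearity).} Since $\mathcal{P}_m$ is a linear operator on $\mathbb{Z}^N$, we have $\mathcal{P}_m((x\star y)-w) = \mathcal{P}_m(x\star y) - \mathcal{P}_m(w)$. Computing $\mathcal{P}_m(w)$ is straightforward: initialize a length-$m$ array to zero in $O(m)$ time, and for each $j\in\supp(w)$ add $w_j\cdot\omega^j$ to coordinate $h_m(j)=j\bmod m$. Computing $\omega^j$ by repeated squaring on a word-sized approximation of $\omega$ takes $O(\log N)$ time per entry, giving $O(\|w\|_0\log N + m)$ overall.

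\textbf{Step 2 (Key identity for $\mathcal{P}_m(x\star y)$).} Because $x,y$ are supported on $[n]$ and $N=2n$, the cyclic convolution $x\star y$ agrees with the linear convolution, so $(x\star y)_j = \sum_{a+b=j} x_a y_b$ without any wrap-around. Substituting this into Definition~\ref{def:operator} and pulling $\omega^{a+b}$ inside the sum,
\[
\mathcal{P}_m(x\star y)_i \;=\; \sum_{\substack{a,b\in[N]\\ (a+b)\bmod m = i}} (x_a\omega^a)\,(y_b\omega^b).
\]
Grouping $a,b$ by their residues modulo $m$, this is exactly the $i$-th entry of the cyclic convolution (in $\mathbb{Z}_m$) of the vectors $u,v\in\mathbb{C}^m$ defined by
\[
u_i \;=\; \sum_{a:\,a\bmod m=i} x_a\,\omega^a,\qquad v_i \;=\; \sum_{b:\,b\bmod m=i} y_b\,\omega^b.
\]
That is, $\mathcal{P}_m(x\star y) = u\star_m v$.

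\textbf{Step 3 (Algorithm and running time).} Compute $u$ by the same hashing procedure used for $w$ above, in $O(\|x\|_0\log N + m)$ time, and similarly $v$ in $O(\|y\|_0\log N + m)$ time. Then compute $u\star_m v$ with a length-$m$ FFT over $\mathbb{C}$, costing $O(m\log m)$. Finally subtract $\mathcal{P}_m(w)$ coordinate-wise. The total cost is $O((\|x\|_0+\|y\|_0+\|w\|_0)\log N + m\log m)$ as claimed.

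\textbf{Step 4 (Precision, the main technical point).} Since $\omega$ is stored to word precision $w = \Theta(\log N)$, each stored power $\omega^j$ is within $1/\mathrm{poly}(N)$ of a true $(2N)$-th root of unity, and this bound is preserved by repeated squaring (absolute error grows by at most a constant factor per multiplication, so after $O(\log N)$ squarings it is still $1/\mathrm{poly}(N)$). The entries of $u$ and $v$ have magnitudes bounded by the coefficient magnitudes in $x$ and $y$, which fit in a word, so summing at most $N$ of them keeps everything polynomially bounded. A standard error analysis for FFT over $\mathbb{C}$ with machine precision $2^{-w}$ then gives final error $O(m\log m)\cdot 2^{-w}$, which is $1/\mathrm{poly}(N)$ once the constant $C$ in $w=C\log N$ is chosen large enough. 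This precision bookkeeping is the only delicate part of the argument; everything else is a direct consequence of the identity in Step~2.
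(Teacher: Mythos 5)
Your proof is correct and follows essentially the same route as the paper: split by linearity, observe that $\mathcal{P}_m(x\star y)$ equals the length-$m$ cyclic convolution of $\mathcal{P}_m(x)$ and $\mathcal{P}_m(y)$ (your $u$ and $v$ are exactly $\mathcal{P}_m(x)$ and $\mathcal{P}_m(y)$), compute that convolution by a length-$m$ FFT, and subtract $\mathcal{P}_m(w)$. The paper verifies the key identity by starting from $(\mathcal{P}_m(x)\star\mathcal{P}_m(y))_i$ and expanding; you derive it in the reverse direction by expanding $\mathcal{P}_m(x\star y)_i$ and regrouping, which amounts to the same manipulation. One small merit of your write-up is that you explicitly flag the use of the support condition ($x,y$ supported on $[n]$, $N=2n$), which is what makes $\omega^{a+b}$ well defined without any mod-$N$ wrap-around in the exponent; the paper uses the same fact implicitly when it introduces the variable $j''=j+j'$ and sums it over $[N]$, and when it applies $\omega^{j}\omega^{j'}=\omega^{j+j'}$ for a $2N$-th root $\omega$.
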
 
\begin{proof}

First, note that 
\[\mathcal{P}_{m}((x\star y) - w )=  \mathcal{P}_{m}(x \star y) - \mathcal{P}_{m}(w),	\]
since $\mathcal{P}_m$ is a linear operator.
We compute $ \mathcal{P}_{m}(x), \mathcal{P}_{m}(y), \mathcal{P}_{m}(w)$ in time $O(m+ \left(\|x\|_0 + \|y\|_0 +\|w\|_0\right) \log N)$ by computing $\omega^j$ for all $j \in \mathrm{supp}(x) \cup \mathrm{supp}(y) \cup \mathrm{supp}(w)$ using Taylor expansion of sine and cosine functions and keeping the first $\Theta(\log N)$ digits. We then compute, via FFT in time $O(m \log m)$, the vector $ \mathcal{P}_{m}(x) \star \mathcal{P}_{m}(y)$. We claim that 
\[	|\left( \mathcal{P}_{m}(x\star y) \right)_i - \left( \mathcal{P}_{m}(x) \star \mathcal{P}_{m}(y) \right)_i| \leq 1/\mathrm{poly}(N).	\]
 
 If we could write down the roots of unity using using an infinite number of bits we would have

\[	\left( \mathcal{P}_{m}(x\star y) \right)_i = \left( \mathcal{P}_{m}(x) \star \mathcal{P}_{m}(y) \right)_i.	\] 

 The only catch is that we have to round every root of unity to $\Theta(\log N)$ bits. We will prove the latter inequality and notice that the precision issue arises only in the usage of the identiy $\omega^j \cdot \omega^{j'} = \omega^{j + j'}$; when rounding to $\Theta(\log N)$ bits this introduces a negligible error of at most $1/\mathrm{poly}(N)$ which allows our argument to go through. Thus, we focus on proving the latter claim, via the following series of inequalities: 

\begin{flalign}
&\left( \mathcal{P}_{m}(x) \star \mathcal{P}_{m}(y)  \right)_i   = \\
&\sum_{\ell,\ell' \in [m]: (\ell+\ell')~\mathrm{mod}~m = i } (\mathcal{P}_{m}(x))_\ell \cdot (\mathcal{P}_{m}(y))_{\ell'}   = \\
%&\sum_{\ell,\ell' \in [m] : (\ell +\ell')~\mathrm{mod}~m  = i} \left( \sum_{j \in [n]: h_{m}(j) = \ell} x_j\omega^j \right)  \cdot \left( \sum_{j' \in [n]: h_{m}(j') = \ell'} y_{j'} \cdot \omega^{ j'} \right)    = \\
&\sum_{\ell,\ell' \in [m] : (\ell+\ell') ~\mathrm{mod}~m = i} \left( \sum_{j,j' \in [n]: h_{m}(j) = \ell, h_{m}(j') = \ell'} x_j y_{j'} \omega^{j+j'}\right)= \\
&\sum_{\ell,\ell' \in [m], j,j' \in [N]: h_{m}(j) = \ell, h_{m}(j') = \ell', (\ell+\ell')~\mathrm{mod}~m = i} x_j y_{j'} \omega^{j+j'} = \\
%\sum_{\ell,\ell' \in [B], j,j' \in [n]: h_{B}(j) = \ell, h_{B}(j') = \ell'} x_j y_{j'} \omega^{j+j'} = \\
%\sum_{j,j' \in [n]: (h_B(j) + h_B(j') )~\mathrm{mod}~B = i } x_j y_{j'} \omega^{j+j'}=\\
&\sum_{j,j' \in [N]: (h_{m}(j) +  h_{m}(j')) ~\mathrm{mod}~m = i}  x_j y_{j'} \omega^{j+j'}= \\
&\sum_{j,j' \in [N]: h_{m}( (j+j')~\mathrm{mod}~m))  = i}  x_j y_{j'} \omega^{j+j'}= \\
&\sum_{j'' \in [N], h_m (j''~\mathrm{mod}~m)= i }   \sum_{j,j' \in [N]: j+j' = j''} x_j y_{j'} \omega^{ j'' }=\\
&\sum_{j'' \in [N], h_m (j''~\mathrm{mod}~m)= i }  \omega^{j''} \left(  \sum_{j,j' \in [N]: j+j' = j''} x_j y_{j'} \right)=\\
&\sum_{j'' \in [N], h_m(j'') = i } \omega^{j''} \left( x \star y \right)_{j''},
\end{flalign}

where (1) to (2) follows by defition of convolution, (2) to (3) by definition of the $\mathcal{P}_B$ operator, (3) to (4) by expanding the product in (2), (5) to (6) by the trivial fact each element in $[n]$ is mapped to some element in $[m]$ via $h_m$, (6) to (7) by the fact that $(h_m(a) + h_m(b) )~\mathrm{mod}~m = ( a~\mathrm{mod}~m + b~\mathrm{mod}~m)~\mathrm{mod}~m = (a+b)~\mathrm{mod}~m = h (a+b) $, (7) to (8) by introducing the auxilliary variabe $j''=j+j'$, (8) to (9) by the fact that $\omega^{j''}$ can be pulled outside of the inner sum since in that scope $j''$ is fixed, and (9) to (10) since $h_m(j'' ~\mathrm{mod}~m ) = (j''~\mathrm{mod}~m) ~\mathrm{mod}~m = (j''~\mathrm{mod}~m) = h_m(j'')$ and the fact that the inner sum is the definition of convolution evaluated at point $j''$.
\end{proof}

In what follows $C$ is some sufficiently large absolute constant.

\begin{algorithm} 
\caption{$\textsc{Locate}(x,y,w,B,\delta)$}  
\begin{algorithmic}
\State $L \leftarrow \emptyset$
\For { $t \in [ 5\lceil \log(1/\delta) \rceil ]$ }
	\State Pick random prime $p$ in $[CB\log^2 N]$.
	\State Compute $\mathcal{P}_{p}((x\star y) - w)$, using Lemma \ref{lem:prepare}.
	\If { $ |(\mathcal{P}_{p}((x\star y) - w))_b| \geq 1/\mathrm{poly}(n)$ for more than $B$ values $b \in [p]$}
	\State Return $\vec{0} \in \mathbb{R}^N$
	\EndIf
	\For{ $b \in [p]$ }
		\If {$ |(\mathcal{P}_{p}((x\star y) - w))_b| \geq 1/\mathrm{poly}(n)$}
			\State $v \leftarrow |(\mathcal{P}_{p}((x\star y) - w))_b|$ \label{lin:tricky}
			\State $ar \leftarrow (\mathcal{P}_{p}((x\star y) - w))_b / |(\mathcal{P}_{p}((x\star y) - w))_b|$.
		%	\State $\phi \leftarrow \mathrm{arg}(  (\mathcal{P}_{\sigma,B}((x\star y) - w))_b )$ 
		%	\State $i \leftarrow$ closest integer to $\left( \frac{n \phi}{2\pi}\right)~$
			\State Compute $i$ from $ar$ using Lemma \ref{lem:binary_search}
			\If { $i> N$ }
				\State $v \leftarrow -v$ 
				\State $i \leftarrow i - N$
			\EndIf
			\State $L \leftarrow L \cup \{ (i,v) \}$
		\EndIf
	\EndFor
\EndFor

\State Prune $L$ to keep pairs  $(i,v)$, which appear at least $(3/4)\cdot 5 \log(1/\delta)$ times.
\State $z \leftarrow \vec{0} \in \mathbb{R}^N$
\For{ $(i,v) \in L$}
	\State $z_i \leftarrow v$
\EndFor
\State Return $z$
\end{algorithmic}
\end{algorithm}

\begin{algorithm}  \label{alg:hash_and_iterate}
\caption{$\textsc{HashAndIterate}(x,y,B,\delta)$} 
\label{alg:hash_and_iterate}
\begin{algorithmic}
\State $w^{(0)} \leftarrow 0$
\For{ $r= 1$ to $ \lceil \log B \rceil$}
	\State $\delta_r \leftarrow \delta / \log B$
	\State $B_r \leftarrow B \cdot 2^{-r+1} $
	\State $z\leftarrow \textsc{Locate}(x,y,w^{(r)},B_r,\delta_r)$
	\State $w^{(r)} \leftarrow w^{(r)} + z$
\EndFor
\State Return $w^{(\lceil \log B \rceil)}$
\end{algorithmic}
\end{algorithm}

The following Lemma is important, since we are dealing with numbers with finite precision.
\begin{lemma} \label{lem:approximate_root_unity}
Let $a,b \in [2N]$ with $a \neq b$. If $\omega$ is rounded such that it fits in a word, then $|\omega^a - \omega^b| = \Omega(1/ N )$.
\end{lemma}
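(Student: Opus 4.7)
The plan is to handle the exact (infinite-precision) case first, and then absorb the rounding error. In the exact case, take $\omega = e^{\pi \i /N}$ so that $\omega$ is a primitive $(2N)$-th root of unity. Using the standard identity $|e^{\i\theta}-1| = 2|\sin(\theta/2)|$, I would factor $\omega^a - \omega^b = \omega^b(\omega^{a-b}-1)$, and since $|\omega^b|=1$, this yields
\[
|\omega^a - \omega^b| \;=\; 2\left|\sin\!\left(\frac{\pi(a-b)}{2N}\right)\right|.
\]

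Next I would lower bound the right-hand side. Since $a\neq b$ and $a,b\in[2N]$, the residue $(a-b)\bmod 2N$ is an integer in $\{1,2,\ldots,2N-1\}$. Over this set, $|\sin(\pi k/(2N))|$ is symmetric around $k=N$ and minimized at the endpoints $k=1$ and $k=2N-1$, where it equals $\sin(\pi/(2N))$. Applying the elementary bound $\sin(x)\ge 2x/\pi$ on $[0,\pi/2]$ gives $\sin(\pi/(2N)) \ge 1/N$, hence $|\omega^a-\omega^b|\ge 2/N$ in exact arithmetic.

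Finally I would address the rounding. The algorithm stores $\omega$ and computes each $\omega^j$ to $\Theta(\log N)$ bits of precision via the Taylor expansions of $\sin$ and $\cos$ (as in the proof of Lemma \ref{lem:prepare}), so the rounded values $\tilde{\omega}^a,\tilde{\omega}^b$ satisfy $|\tilde{\omega}^a-\omega^a|, |\tilde{\omega}^b-\omega^b| \le 1/N^C$ for any constant $C$ of our choosing, simply by increasing the precision constant. By the triangle inequality,
\[
|\tilde{\omega}^a-\tilde{\omega}^b| \;\ge\; |\omega^a-\omega^b| - |\tilde{\omega}^a-\omega^a| - |\tilde{\omega}^b-\omega^b| \;\ge\; \frac{2}{N} - \frac{2}{N^C} \;=\; \Omega(1/N),
\]
which is the desired bound.

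The argument has essentially no hard step; the only care required is the bookkeeping in the last paragraph, namely making sure the precision constant hidden in $\Theta(\log N)$ bits is chosen large enough that the rounding slack is dominated by the $\Omega(1/N)$ lower bound coming from the exact analysis. Once that is fixed, the claim follows directly.
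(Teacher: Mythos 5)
Your proof is correct and follows essentially the same approach as the paper's: identify that the minimum distance occurs between adjacent $(2N)$-th roots of unity, bound it by the chord length (arc $\approx \pi/N$), and observe that the $1/\mathrm{poly}(N)$ rounding error from $\Theta(\log N)$-bit precision cannot destroy this gap. Your version simply fills in the details the paper leaves implicit (the $2|\sin(\theta/2)|$ identity, the Jordan inequality, and the explicit triangle-inequality bookkeeping for the rounding slack), which makes it a cleaner and more rigorous rendering of the same argument.
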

\begin{proof}
The quantity is minimized when $a=b+1$. For $N$ sufficiently large, it can then be approximated by an arc of length $2\pi /(2N) $, and since the word size $w$ is $\Omega(\log N)$ we get the desired result.
\end{proof}

The follows Lemma is a crucial building block of our algorithm.
\begin{lemma}\label{lem:binary_search}
Given $\omega^j$ for $j \in [2N]$ with $\omega$ rounded to fit a word, one can find $j$ in time $O(\log^2 N)$.  
\end{lemma}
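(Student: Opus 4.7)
The plan is to locate $j$ by binary search on the integer interval $[0, 2N)$, exploiting that $\omega^j = e^{i\pi j/N}$ traces the unit circle at equal angular spacing $\pi/N$. To keep each step efficient, I would first precompute the $O(\log N)$ powers $\omega, \omega^2, \omega^4, \ldots, \omega^{2^{\lceil \log(2N)\rceil}}$ by repeated squaring, in $O(\log N)$ total time; afterwards $\omega^m$ can be assembled for any $m \in [2N]$ in $O(\log N)$ time by multiplying the subset of these powers indicated by the binary expansion of $m$.

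Next I would identify which semicircle $j$ lies on by inspecting the sign of $\mathrm{Im}(\omega^j)$: a positive value places $j$ in $(0, N)$ and a negative one in $(N, 2N)$, while the degenerate cases $j \in \{0, N\}$ are told apart using $\mathrm{Re}(\omega^j)$. This confines $j$ to an interval $[l, r]$ of length at most $N$, and then I iterate the following step: pick the midpoint $m = \lfloor (l+r)/2\rfloor$, compute $\omega^m$ in $O(\log N)$ time, form $z := \omega^j \cdot \overline{\omega^m}$, and examine $\mathrm{Im}(z) = \sin(\pi(j-m)/N)$. Because $|j-m| \leq (r-l)/2 \leq N/2$, the argument $\pi(j-m)/N$ lies in $[-\pi/2, \pi/2]$, where $\sin$ is monotone; hence the sign of $\mathrm{Im}(z)$ correctly indicates whether $j > m$, $j < m$, or $j = m$, and the interval is halved. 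After $O(\log N)$ iterations $j$ is pinned down exactly.

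The main obstacle is the precision analysis: because $\omega$ is rounded to fit in a word, each intermediate complex number carries error of order $1/\mathrm{poly}(N)$, and a sign test for a near-zero quantity could in principle fail. Lemma~\ref{lem:approximate_root_unity} bounds the separation between distinct $\omega^a$ from below by $\Omega(1/N)$; more precisely, for $j \ne m$ with $|j-m| \leq N/2$ one has $|\mathrm{Im}(z)| \geq \sin(\pi/N) = \Omega(1/N)$, which exceeds the accumulated rounding error by a $\mathrm{poly}(N)$ factor once the word-size constant $C$ in the preliminaries is chosen sufficiently large. Combining the $O(\log N)$ search rounds with the $O(\log N)$ cost per round yields the claimed $O(\log^2 N)$ running time.
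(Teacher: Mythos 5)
Your proposal is correct and follows essentially the same approach as the paper: first localize $j$ coarsely by quadrant/semicircle, then run a logarithmic-depth search over the arc, invoking Lemma~\ref{lem:approximate_root_unity} to guarantee that the separation $\Omega(1/N)$ dominates the rounding error, with each step costing $O(\log N)$ to evaluate a power of $\omega$, for $O(\log^2 N)$ total. The only differences are cosmetic implementation choices --- you use binary search with the rotation trick $z = \omega^j \cdot \overline{\omega^m}$ and compute $\omega^m$ by repeated squaring, whereas the paper uses a ternary search inside a quarter-arc and computes $\omega^\ell$ by Taylor expansion of sine and cosine --- and neither choice changes the argument or the running time.
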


\begin{proof}
From the pair (real part of $\omega^j$,imaginary part of $\omega^j$) we can find in which of the four following sets $j$ lies in 
\begin{align*}
 \left\{0,\ldots , \lceil N \right\},\\
 \left\{\left\lceil \frac{N}{2} \right\rceil+1,\ldots, N \right\}\\
 \left\{N +1,\ldots,\left\lceil \frac{3}{2}N \right\rceil\right\}\\
\left\{\left\lceil \frac{3}{2}N \right\rceil +1,\ldots,2N-1\right\}
\end{align*} since each one corresponds to an arc of length approximately $\pi/4$ of the complex circle. After detecting the set (equivalently the corresponding arc of the complex circle) one can perform a standard ternary search to find $j$. Due to Lemma \ref{lem:approximate_root_unity} $O(\log N)$ iterations suffice to find $j$. In order for to implement ternary search, at every time we need to compute $\omega^\ell$, for some $\ell \in [2N]$,  which can be done in $\Theta(\log N)$ time up to error $1/\mathrm{poly}(N)$ by performing a Taylor expansion on $\omega^\ell$. This gives the desired result. We note that if we are allowed to precompute all $(2N)$-th roots of unity we may obtain running time $O(\log N)$.
\end{proof}

Before proceeding, we first note the following catch in Line \ref{lin:tricky}. In fact, $v$ will be a complex number with integer magnitude plus (due to rounding errors) a small $1/\mathrm{poly}(n)$ error. Thus we can round $v$ to the closest integer to read off exactly the desired value. 

The following Lemma resembles standard isolation-type arguments which appear in the sublinear-time sparse recovery literature. $C$ is a large enough absolute constant.

 \begin{lemma} \label{lem:isolation}
Let an integer $B$ such that $B > C\cdot \| (x \star y) - w \|_0$, and let $p$ be chosen at random from $[CB \log^2 n]$. Then, with probability $1-q$, there exist at least $(1-\gamma) \| (x \star y) - w)\|_0$ indices $j \in \mathrm{supp}((x \star y) -w )$ such that 
	\[	\forall j' \in \mathrm{supp}( (x \star y) -w ) \setminus \{j\} : h_{p}(j') \neq h_{p}(j),\]

where $2 C^{-2}/q  = \gamma $.
\end{lemma}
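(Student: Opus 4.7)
Set $S = \supp((x \star y) - w)$ and $k' = |S|$; by hypothesis $k' \leq B/C$. For any two distinct $j, j' \in S$, a collision $h_p(j) = h_p(j')$ happens exactly when $p$ divides the nonzero integer $d_{j,j'} := j - j'$, where $|d_{j,j'}| < N$. The plan is to show that for a random prime $p$ drawn uniformly from those in $[CB \log^2 n]$, every fixed pair collides with probability $O(1/(CB))$, take a union bound over the at most $k'-1$ potential colliders of a given $j$, and then apply Markov's inequality to the count of non-isolated indices.

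\paragraph{Prime counting and collision bound.}
By the prime number theorem, the number of primes in $[CB \log^2 n]$ is $\Omega(CB \log^2 n / \log(CB \log^2 n)) = \Omega(CB \log n)$, absorbing lower-order factors into the constant. On the other hand, for any nonzero integer $d$ with $|d| \leq N$, the number of distinct prime divisors of $d$ is at most $\log_2 N$, since every prime factor is at least $2$. Consequently, for a random prime $p$ drawn from $[CB \log^2 n]$ and a fixed pair $j \neq j' \in S$,
\[
\Pr[\,p \mid d_{j,j'}\,] \;\leq\; \frac{\log_2 N}{\Omega(CB \log n)} \;=\; O\!\left(\frac{1}{CB}\right).
\]
Taking a union bound over the at most $k'-1 \leq B/C$ potential colliders for a fixed $j \in S$, the probability that $j$ is \emph{not} isolated is at most $O(1/C^2)$ (the two factors of $B$ cancel, leaving $1/C^2$ up to an absolute constant).

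\paragraph{Markov and cleanup.}
Let $X$ denote the number of indices $j \in S$ that are not isolated under $h_p$. By linearity of expectation, $\E[X] \leq k' \cdot O(1/C^2)$. Applying Markov's inequality,
\[
\Pr\!\left[\, X \;\geq\; \frac{2}{qC^{2}}\, k' \,\right] \;\leq\; q,
\]
so with probability at least $1-q$ we have $X \leq \gamma k'$ with $\gamma = 2C^{-2}/q$, which is exactly the claimed bound.

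\paragraph{Main obstacle.}
The only delicate point is making the prime-counting estimate clean: one must be sure that $B$ is large enough (or more precisely, that $CB \log^2 n = \omega(1)$) for the prime number theorem estimate to kick in, and that the ``$\log\log$'' losses from $\log(CB \log^2 n)$ can be absorbed into the constant hidden inside $C$. Everything else is a routine pairing plus Markov argument in the style of standard isolation lemmas for sparse recovery.
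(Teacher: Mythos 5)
Your proof is correct and takes essentially the same route as the paper: bound the per-pair collision probability by combining the prime number theorem with the fact that an integer below $N$ has $O(\log N)$ prime divisors, union-bound over the $\leq B/C$ other support elements to get non-isolation probability $O(C^{-2})$ per index, and finish with linearity of expectation plus Markov. The only cosmetic difference is that the paper tracks the constant ($2C^{-2}$) explicitly rather than as $O(1/C^2)$, which makes the final $\gamma = 2C^{-2}/q$ fall out without the hand-waving you flag in your "main obstacle" paragraph.
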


\begin{proof}
Let $j,j' \in \mathrm{supp}((x \star y) - w)$, with $j \neq j'$. The hash function $h_{p}$ is not pairwise independent, but the following property, which suffices for our purpose, holds
	\[	\mathbb{P} \left[ h_{p} (j) = h_{p}(j') \right] \leq 1/B.\]

To see that, observe first that in order for $h_{p} (j) = h_{p}(j')$ to hold, it must be the case that $ p$ is a divisor of $j-j'$. Since $j-j' \leq N$ there there can be at most $\lceil \log N\rceil$ prime divisors of $j-j'$, otherwise $j -j'$ would be at least $2^{\left\lceil  \log N \right\rceil+1 } > N$. By the Prime Number Theorem, there exist at least $(C/2) B \log N$ primes in $[C B \log^2 N]$, and hence a random prime will be one of the divisors of $j-j'$ with probability $2/(CB)$. All the primes in $[C B \log^2 N]$ can be found in $O( B \log^2 N \cdot\log \log (B \log^2 N) )$ time using Eratosthene's sieve, and then we can sample uniformly at random from them.

%\[\sigma(j-j')~\mathrm{mod}~n \in \left[ 0, B, 2B, \ldots ,\left \lfloor \frac{n}{B} \right \rfloor\cdot B \right], \] 
%\cup \left[ n - \left \lfloor \frac{n}{B} \right \rfloor\cdot B \ldots, n - 2B, n- B\right], \]
 %which happens if $(j-j')$ is mapped under the permutation $i:\rightarrow (\sigma i )~\mathrm{mod}~n$ to one of the elements 
%\[\left[0,B,2B,\ldots \left \lfloor \frac{n}{B} \right \rfloor \cdot B \right] \]%\cup \left[ n - \left \lfloor \frac{n}{B} \right \rfloor\cdot B \ldots, n - 2B, n- B\right].\] 
%Since $\sigma$ is invertible in $\mathbb{Z}_n$, and $j \neq j'$, this probability is $ \lfloor n / B \rfloor / n \leq 1/B$.

Le the random variable $X_j$ be the indicator variable of the event 
\[ \mathcal{E}_j  \vcentcolon \{\exists j' \in \mathrm{supp}((x \star y) -w) \setminus \{j\} : h_{p}(j) = h_{p}(j')\}.  \]

	Its expected value is $\mathbb{E} \left[ X_j \right ] = \mathbb{P} \left[ \mathcal{E}_j~\mathrm{holds} \right]\leq (\|(x\star y ) - w)\|_0 -1 ) \cdot (2/CB) \leq 2C^{-2}$, by a union-bound. We have that  
	\[	\mathbb{E} \left[ \sum_{j \in \mathrm{supp}((x \star y) - w)} X_j \right] \leq (2C^{-2}) \|(x \star y) - w \|_0.\]

By Markov's inequality, with probability $1-q$ there exist at most $\gamma \|(x \star y) - w\|_0$ indices $j \in \mathrm{supp}((x\star y) - w )$ such that $X_j = 1$, if $ 2C^{-2}/q =  \gamma $. This finishes the proof of the claim.
\end{proof}

%The following argument is pretty standard in the sparse recovery literature, we give its proof for completeness.

We proceed by analyzing the iterative loop in Algorithm \ref{alg:hash_and_iterate}.

\begin{lemma}
Let the constants $C,\gamma,q$ be as in Lemma \ref{lem:isolation} with $q \leq 2^{-12/5}$, and assume that $B > C \|(x\star y) -w \|_0$. If $ (x \star y) - w$ is not the zero vector, then with probability $1-\delta$ the subroutine $\textsc{Locate}(x,y,w,B,\delta)$ returns a vector $z$ such that
	\[	\|z - ((x \star y) - w)\|_0 \leq (5\gamma) \|(x \star y) - w )\|_0.	\]
\end{lemma}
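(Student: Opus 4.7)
The plan is to decompose the error $\|z - ((x\star y) - w)\|_0$ into \emph{missed true pairs} (indices $j \in S := \mathrm{supp}((x\star y) - w)$ for which $z_j \neq ((x\star y) - w)_j$) and \emph{spurious pairs} (indices $i \notin S$ for which $z_i \neq 0$), and bound each contribution by $O(\gamma K)$, where $K := \|(x\star y) - w\|_0$.

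For the missed true pairs, I would first note that if $j \in S$ is isolated under $h_p$ in iteration $t$, then Lemma \ref{lem:prepare} implies that bucket $h_p(j)$ equals $((x\star y) - w)_j \cdot \omega^j$ up to $1/\mathrm{poly}(N)$ error, so rounding the magnitude recovers the integer coefficient, and Lemma \ref{lem:binary_search} recovers $j$ from the phase (the sign disambiguation $i \leftarrow i - N$ accounts for the extra factor of $2$ in the root of unity). The expectation argument inside the proof of Lemma \ref{lem:isolation} shows that $\mathbb{P}[j \text{ not isolated in iter } t] \leq 2/C^2$ for each fixed $j \in S$. Because the iterations use independent primes, the number $Y_j$ of iterations in which $j$ is correctly decoded stochastically dominates a binomial with $T = 5\lceil \log(1/\delta) \rceil$ trials and success probability $1 - 2/C^2$. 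Taking $C$ large, Chernoff gives $\mathbb{P}[Y_j < (3/4) T] \leq \delta^c$ for some constant $c > 0$. Applying linearity of expectation and then Markov to the total count of missed indices yields at most $(5/2)\gamma K$ missed true pairs except with probability $\delta/2$.

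For the spurious pairs I would call an iteration \emph{good} if the conclusion of Lemma \ref{lem:isolation} holds, which occurs with probability at least $1 - q \geq 1 - 2^{-12/5}$. In a good iteration, at most $\gamma K$ indices of $S$ are non-isolated, so at most $\gamma K$ spurious pairs are inserted into $L$ (one per non-isolated, non-empty bucket). In a bad iteration at most $K$ spurious pairs are inserted, since the number of non-zero buckets is always at most $K < B$ and the safety check in Locate is never triggered. By Chernoff applied to the indicator ``iteration $t$ is bad'', at most $2qT$ iterations are bad except with probability $\delta/4$. Hence the total number of spurious insertions into $L$ is $O(\gamma K T + q K T)$. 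Since a pair lands in $z$ only if it appeared in at least $(3/4) T$ iterations, the number of distinct spurious entries in $z$ is $O(\gamma K + q K)$, which is $O(\gamma K)$ once $C$ is large enough that $q \leq \gamma$.

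The main obstacle is the spurious-pair analysis: one cannot naively Chernoff a fixed spurious pair $(i^\ast, v^\ast)$ since its per-iteration appearance probability is hard to pin down, and the set of candidate spurious pairs is huge. The pigeonhole trick above sidesteps this by bounding the aggregate count of spurious insertions and dividing by the pruning threshold, at the cost of replacing ``no spurious pair'' with the $O(\gamma K)$ bound that the lemma actually permits. Combining the missed-pair and spurious-pair bounds by a union bound yields $\|z - ((x\star y) - w)\|_0 \leq 5\gamma K$ with probability at least $1 - \delta$.
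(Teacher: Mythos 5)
Your approach is genuinely different from the paper's. For the missed true pairs, you run a per-index Chernoff bound followed by Markov on the total count, whereas the paper conditions on a single global event --- that the conclusion of Lemma~\ref{lem:isolation} holds in at least a $4/5$-fraction of the $T = 5\lceil\log(1/\delta)\rceil$ repetitions (bounded via $\binom{T}{m}q^m\leq 2^T q^m$) --- and then uses a deterministic counting argument (``if more than $\beta$ indices are missed in $\geq T/4$ repetitions, the total count of missed pairs $(j,t)$ exceeds the bound'') to conclude. Both routes are legitimate; your per-index route is arguably cleaner. For the spurious entries, the paper simply asserts a $\gamma K$ bound with no supporting argument, while your ``total insertions divided by pruning threshold'' pigeonhole is a genuine and welcome attempt to fill that gap.

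That said, there are two concrete problems. First, the Chernoff step ``at most $2qT$ iterations are bad except with probability $\delta/4$'' does not go through: the mean number of bad iterations is $qT$ and the multiplicative Chernoff tail $\Pr[X\geq 2qT]\leq (e/4)^{qT}$ requires $qT\gtrsim \log(1/\delta)$, i.e.\ $q\gtrsim 1/2$ after substituting $T=5\log(1/\delta)$; but the lemma only assumes $q\leq 2^{-12/5}\approx 0.19$. You would need to either fall back on the paper's cruder union bound $\binom{T}{m}q^m\leq 2^T q^m$ with a larger threshold $m$, or explicitly decrease $q$ (hence increase $C$). Second, even granting a bound of (say) $m_0$ bad iterations, your aggregate spurious count is $\gamma K(T-m_0)+K\cdot m_0$, and after dividing by $(3/4)T$ you obtain roughly $\tfrac{4}{3}\gamma K+\tfrac{4}{3}K\cdot m_0/T$. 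The second term is $\Theta(K)$ unless $m_0/T=O(\gamma)$, i.e.\ unless the bad-iteration fraction is at most $O(\gamma)$. Since Lemma~\ref{lem:isolation} ties the parameters by $q\gamma = 2C^{-2}$, having $q\leq\gamma$ is equivalent to $\gamma\geq\sqrt{2}/C$ --- it is not something you get for free by ``taking $C$ large'' while also keeping $\gamma$ small and $q\leq 2^{-12/5}$; you must set $q$ (and the accompanying union bound) with this constraint in mind. The high-level plan is sound and closer to a complete argument than the paper's own sketch of the spurious case, but the constants need to be reworked so that both the tail bound on bad iterations and the final $5\gamma K$ budget actually hold.
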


\begin{proof}
Fix $t \in [5 \log(1/\delta)]$, and assume that $q,C,\gamma$ satisfy $2C^{-2}/ p = \gamma$ 
%such the conclusion of Lemma \ref{lem:isolation} holds. 
We have that 

\begin{align*}
(\mathcal{P}_{p}(x \star y) )_i - (\mathcal{P}_{p}(w) )_i = (\mathcal{P}_{p}(x \star y -w ) )_i = \\
\sum_{j \in [N]: h_{p}(j)=i} ((x \star y) - w)_j \omega^j = \\
\sum_{j \in [N]:  h_{p}(j)=i~\mathrm{and}~((x \star y) -w )_j \neq 0}   ((x \star y) - w)_j\omega^j 
\end{align*}

The condition of Lemma \ref{lem:isolation} holds, so with probability $1-q$ its conclusion also holds. Condition on that event and consider the at least $(1-\gamma)\|(x \star y) -w \|_0$ indices in $\|(x\star y) - w \|_0$, for which the conclusion of Lemma \ref{lem:isolation} holds. Fix such an index $j^*$ and let $i^* = h_{p}(j^*)$. Due to the isolation property, we have that
\begin{align*}
(\mathcal{P}_{p}(x \star y) )_{i^*} - (\mathcal{P}_{p}(w) )_{i^*}  =  ((x \star y)_{j^*}  - w_{j^*} )\omega^{j^*}.
\end{align*}

Now, due to Lemma \ref{lem:approximate_root_unity} subroutine $\textsc{Locate}(x,y,w,B,\delta)$ will infer $j^*$ correctly from $(\mathcal{P}_{\sigma,B}(x \star y))_{i^*} - (\mathcal{P}_{\sigma,B}w)_{i^*}$, as well as $(x \star y)_{j^*} - w_{j^*}$. We will say $j^*$ is recognised in repetition $t$.

For the rest of the proof, unfix $t$. Since the conclusion of Lemma \ref{lem:isolation} holds with probability $1-q$, the number of $t \in [5 \log (1/\delta)]$ for which the conclusion of the Lemma holds is at least $4 \log(1/\delta)$ with probability $1-\delta$ since

	\[	{ 5 \log(1/\delta) \choose (5/2) \log(1/\delta)} p^{(5/2) \log(1/\delta)} \leq 2^{ 5 \log(1/\delta)} q^{(5/2) \log(1/\delta)} \leq \delta,\]

as long as $q \leq 2^{-12/5} \cdot $.

Let us call for convenience the above pairs good.
Thus, with probability $1-\delta$ the number of pairs $(j,t)$ for which $j$ is \textbf{not} recognised in repetition $t$ is at most 
\[	\gamma \cdot 4 \log(1/\delta) \cdot \| (x\star y ) - w \|_0  +  \log(1/\delta) \| (x\star y ) - w \|_0 .	\] Hence there exist at most $\beta = 4 \gamma \| (x\star y ) - w \|_0$ indices which are recognized in less than $(3/4) \cdot 5 \log(1/\delta)$ repetitions, otherwise the number of \textbf{not} good pairs $(j,t)$ is at least  
\begin{flalign*}
&1 + \beta\cdot \frac{1}{4} \cdot 5\log(1/\delta) \|(x \star y) - w\|_0 > \\
&\gamma \cdot 4 \log(1/\delta) \cdot \| (x\star y ) - w \|_0  + \log(1/\delta) \| (x\star y ) - w \|_0\end{flalign*}
%\[	(\beta+1) \cdot \frac{1}{4} \cdot 5\log(1/\delta) \|(x \star y) - w\| > \gamma \cdot 5 \log(1/\delta) \|(x \star y) - w\|_0, 	\]

which does not hold for $\|(x\star y) -w \|_0 > 0$. Moreover, there can be at most $\gamma \|(x\star y ) - w\|_0$ indices that do not belong in $\mathrm{supp}((x \star y ) - w )$, and which were mistakenly inserted into $z$. This gives in total the factor of $5\gamma$. 
\end{proof}

\begin{lemma} \label{lem:bound}
Let $\gamma < 1/10$, and let also $B$ be an integer such that $ B > C \|(x \star y)\|_0$. Then the routine $\textsc{HashAndIterate}(x,y,B,\delta)$ returns an $\|x \star y \|_0$-sparse vector $r$ such that $r = x \star y$, with probability $1-\delta$. Moreover, the running time is 
\begin{align*}
& (B\log^2 N \cdot \log (B \log^2 N) + \\
& (\|x\|_0  + \|y\|_0) \log N \log B )\cdot \log(\log B/\delta).	
\end{align*}
\end{lemma}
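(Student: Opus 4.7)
The plan is to argue correctness by induction on the outer iteration counter $r$, combined with a union bound over all $\lceil \log B \rceil$ calls, and to bound the running time by summing the per-call cost of $\textsc{Locate}$ as a geometric series in $B_r$. The key invariant I would maintain is $\|(x\star y) - w^{(r)}\|_0 \leq (5\gamma)^r \|x\star y\|_0$, which in particular forces $\|w^{(r)}\|_0 \leq 2 \|x\star y\|_0 = O(B)$ throughout. The base case $r=0$ holds since $w^{(0)} = 0$.

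For the inductive step at round $r$, I would first check the precondition $B_r > C \|(x\star y) - w^{(r-1)}\|_0$ needed by the preceding Locate-correctness lemma. Since $B_r = B \cdot 2^{-r+1}$ and the inductive hypothesis gives $\|(x\star y) - w^{(r-1)}\|_0 \leq (5\gamma)^{r-1}\|x\star y\|_0 \leq 2^{-(r-1)}\|x\star y\|_0$ (using $5\gamma < 1/2$), the assumption $B > C\|x\star y\|_0$ yields this precondition immediately. Applying the preceding lemma with failure probability $\delta_r = \delta/\lceil\log B\rceil$, the output $z$ of $\textsc{Locate}$ satisfies $\|z - ((x\star y) - w^{(r-1)})\|_0 \leq 5\gamma \|(x\star y) - w^{(r-1)}\|_0$, and $w^{(r)} = w^{(r-1)} + z$ rearranges this exactly into the invariant for $r$. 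After $\lceil \log B \rceil$ rounds the residual sparsity is at most $(5\gamma)^{\lceil\log B\rceil}\|x\star y\|_0 < \|x\star y\|_0/B < 1$, so being an integer-valued count of nonzeros it must actually be zero, giving $w^{(\lceil\log B\rceil)} = x\star y$ exactly. A union bound over the $\lceil\log B\rceil$ calls gives total success probability at least $1-\delta$.

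For the running time, I would bound a single call $\textsc{Locate}(x,y,w^{(r-1)},B_r,\delta_r)$ by chaining Lemma~\ref{lem:prepare} and Lemma~\ref{lem:binary_search}: each of the $O(\log(1/\delta_r))$ repetitions computes $\mathcal{P}_p((x\star y) - w^{(r-1)})$ in time $O((\|x\|_0 + \|y\|_0 + \|w^{(r-1)}\|_0)\log N + B_r \log^2 N \log(B_r \log^2 N))$ and then performs at most $B_r$ ternary searches at cost $O(\log^2 N)$ each (the early-exit check caps the number of decoded coordinates). Using $\|w^{(r-1)}\|_0 = O(B)$ from the invariant, summing over $r = 1,\ldots,\lceil\log B\rceil$ yields a geometric series $\sum_r B_r = O(B)$ so the first term collapses to $O(B \log^2 N \log(B\log^2 N))$, while the $(\|x\|_0 + \|y\|_0)\log N$ term accumulates a $\log B$ factor from being paid in every round; multiplying by the per-call $\log(\log B/\delta)$ factor matches the claimed bound. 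The main obstacle is keeping the geometrically shrinking budgets $B_r$ aligned with the shrinking residual, which is exactly what demands $\gamma < 1/10$ so that the per-round sparsity-reduction rate $5\gamma$ beats the halving rate of $B_r$ and the invariant can propagate.
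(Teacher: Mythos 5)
Your proof is correct and takes essentially the same route as the paper: the same invariant $\|(x\star y)-w^{(r)}\|_0\le(5\gamma)^r\|x\star y\|_0$, the same union bound over the $\lceil\log B\rceil$ calls, the same termination argument that an integer sparsity strictly below one must be zero, and the same geometric-series accounting for the running time. You are more explicit than the paper's terse ``easy induction'' about verifying the precondition $B_r > C\|(x\star y)-w^{(r-1)}\|_0$ in each round and about why $\gamma<1/10$ (so that $5\gamma<1/2$) makes the residual shrink at least as fast as $B_r$, which are exactly the details the paper leaves to the reader.
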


\begin{proof}It is an easy induction to show that at each step
$\|(x \star y) - w^{(r)}\|_0 \leq (5 \gamma)^r \|x\star y\|_0$, with probability $ 1-\delta r / \log B$, so the total failure probability is $\delta$.
Conditioned on the previous events happending, we have $x \star y - w^{(\lceil \log B \rceil )}$ is the all-zeros vector since $\|x \star y \|_0 \leq (4\gamma)^{\lceil \log B \rceil } \|(x \star y) - w\|_0 < 1$. This gives that $w^{(\lceil \log B \rceil)} = x \star y$.

The running time for $\textsc{Locate}(x,y,w,B_r,\delta_r)$, since $\|w\|_0 \leq 2B$ at all times is 
(ignoring constant factors for ease of exposition)
\begin{align*}
 ( B_r \log^2 N \log (B_r \log^2 N) +\\
 (\|x\|_0 + \|y\|_0 + B) \log N ) \cdot \log (\log B/\delta_r),
\end{align*}

due to Lemma \ref{lem:prepare} and Lemma \ref{lem:binary_search}. 

So the total running time of $\textsc{HashAndIterate}(x,y,B,\delta)$ becomes, by summing over all $\lceil \log B \rceil$ rounds (ignoring constant factors for ease of exposition)

\begin{align*}
& ( B \log^2 N \cdot \log( B \log^2 N) + \\
&  ( \|x \|_0 + \|y\|_0) \cdot\log N \cdot \log B)\cdot \log(\log B/\delta).
\end{align*}

\end{proof}

The following Lemma is a standard claim which follows by the fact that a degree $n$ polynomial over the prime field $\mathbb{Z}_p$ has at most $n$ roots. We give a sketch of the proof. 
\begin{lemma}\label{lem:fingerprinting}
There exists a procedure $\textsc{EqualityTesting}(x,y,w)$, which runs in time $ O(\|x\|_0 + \|y\|_0 + \|w\|_0)\log N\log(1/\delta) + \tilde{O}( \log^2N \cdot \log (N/\delta) \cdot \log(1/\delta))$, and answers whether $x \star y = w $ with probability $1-\delta$.\end{lemma}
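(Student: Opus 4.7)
The plan is a standard Schwartz--Zippel style polynomial identity test, evaluating the polynomials at a random point in a random prime field. Identify $x,y,w$ with the polynomials $f(z)=\sum_j x_j z^j$, $g(z)=\sum_j y_j z^j$, $h(z)=\sum_j w_j z^j$ of degree less than $N$, $N$, and $2N$ respectively. Then $x \star y = w$ if and only if the polynomial $D(z) := f(z) g(z) - h(z)$ is identically zero. Since $\deg D \le 2N$ and, by assumption, all input coefficients fit in a word, $D$ also has coefficients of magnitude at most $\mathrm{poly}(N)$.

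The procedure is: pick a random prime $p$ from the interval $[CN/\delta,\,2CN/\delta]$ for a large constant $C$, pick $\alpha\in\Z_p$ uniformly at random, evaluate $A=f(\alpha)\bmod p$, $B=g(\alpha)\bmod p$, $H=h(\alpha)\bmod p$, and declare the two quantities equal iff $A\cdot B \equiv H \pmod p$. If $D\equiv 0$ the test always accepts. If $D\not\equiv 0$, then $D$ has at most $2N$ roots in $\Z_p$, so $\Pr_\alpha[D(\alpha)\equiv 0\pmod p] \le 2N/p \le \delta/C'$, which we can drive to any desired $\delta$ by running $O(\log(1/\delta))$ independent trials and accepting only if \emph{all} trials accept.

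For the running time, I would find a random prime in the chosen range via Bertrand's postulate plus Miller--Rabin primality testing on $O(\log(N/\delta))$ random candidates; each Miller--Rabin test on an integer of $\Theta(\log(N/\delta))$ bits costs $\tilde O(\log^2(N/\delta))$ bit operations, giving $\tilde O(\log^3(N/\delta))$ per trial and $\tilde{O}(\log^2 N \cdot \log(N/\delta)\cdot\log(1/\delta))$ across all trials (matching the additive term in the stated bound). For the evaluations, compute $\alpha^j \bmod p$ by repeated squaring in $O(\log N)$ mod-$p$ multiplications for each $j \in \supp(x) \cup \supp(y) \cup \supp(w)$, then accumulate $A,B,H$. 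Since $p$ has $O(\log(N/\delta)) = O(\log N)$ bits under our word-size assumption (treating $\log(1/\delta)$ as not much larger than $\log N$; otherwise we pay the extra $\log(N/\delta)$ factor uniformly), each mod-$p$ multiplication is $O(1)$ word operations, yielding $O((\|x\|_0+\|y\|_0+\|w\|_0)\log N)$ time per trial and $O((\|x\|_0+\|y\|_0+\|w\|_0)\log N\log(1/\delta))$ over all repetitions.

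The only nontrivial part is the correctness probability bound, and it follows directly from the fact that over any field a nonzero polynomial of degree $d$ has at most $d$ roots, applied in $\Z_p$. I do not foresee any real obstacle: prime sampling and fast modular exponentiation are entirely standard, and the only sanity check is to verify that the coefficients of $D$ are small enough (they are bounded by $N$ times the square of a word-sized integer, hence polynomial in $N$) so that reduction modulo $p$ does not accidentally zero out $D$ itself, which is handled by taking $p$ larger than any coefficient of $D$.
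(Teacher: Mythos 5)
Your approach matches the paper's closely: both are Schwartz--Zippel tests modulo a random prime, with Miller--Rabin to sample the prime, repeated squaring to evaluate the sparse polynomials, and $O(\log(1/\delta))$ repetitions for amplification. The one place your argument has a real flaw is the closing remark about why reducing $D=fg-h$ modulo $p$ cannot accidentally produce the zero polynomial. You claim this is ``handled by taking $p$ larger than any coefficient of $D$,'' but that is not achievable at the prime size you (or the paper) use: with word size $C\log N$ the coefficients of $D$ can be as large as $N^{2C+1}$, while your prime is only $\Theta(N/\delta)$. The correct argument, which the paper implicitly relies on, is different: any fixed nonzero coefficient of $D$ has magnitude $N^{O(1)}$ and therefore has $O(1)$ prime factors exceeding $c'N$; by the Prime Number Theorem there are $\Theta(N/\log N)$ primes in $[c'N, 2c'N]$, so a random one divides that coefficient with probability $O(\log N / N)$, which is negligible. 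Separately, your $\delta$-dependent prime range $[CN/\delta,\,2CN/\delta]$ is both unnecessary and the source of the word-size concern you flag yourself; the paper keeps $p=\Theta(N)$ (fixed, word-sized) and achieves the $\delta$ guarantee purely by repeating the random evaluation $\Theta(\log(1/\delta))$ times, which is cleaner.
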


\begin{proof}
Let $c'$ large enough. We pick a random prime $p \in[c'N , 2c' N ]$, by picking a random number in that interval and running the Miler-Rabin primality test with target failure probability $\delta/3$. We form polynomials $f_x,f_y,f_w$ that have $x,y,w$ as their coefficients respectively. We then pick $\Theta(\log (3/\delta))$ random elements in $\mathbb{Z}_p$ and check whether $(f_x(r) \cdot f_y(r) )~\mathrm{mod}~p = f_w(r)~\mathrm{mod}~p$ or not. We return \textsc{Yes} if this is the case for all chosen, and \textsc{No} otherwise.  To evalute each each of the polynomials we need time (number of coefficients) $ \cdot \log N$, in order to perform repetated squaring. The Miller-Rabin test takes time $\tilde{O}(\log^2 N \cdot \log (1/\delta))$, and sampling a prime with probability $1-\delta/3$ needs time $\Theta(\log (N/\delta))$.

\end{proof}

We are now ready to prove our main theorem.

\begin{proof}

Let $c$ be a sufficiently small constant and $C$ a sufficiently large constant. For $r=1,2,\ldots$, one by one we set $B_r \leftarrow C \cdot 2^r$ and $\delta_r = c\cdot r^{-2}$, run $\textsc{HashAndIterate}(x,y,B_r,\delta_r)$ to obtain $z$, and feed it to  $\textsc{EqualityTesting}(x,y,z, cr^{-2})$. We stop when the latter procedure returns \textsc{Yes}. The total failure probability thus is at most

\[	\underbrace{\sum_{r \geq 1 } cr^{-2}}_{\textsc{EqualityTesting}} + \sum_{r \geq 1 } \delta_r = 2\sum_{ r \geq 1 } c r^{-2} \leq \frac{1}{100}.	\]

Conditioned on the aforementioned event happening, the total running time is (ignoring constants) 
\begin{align*}
& \|x \star y\|_0 \log^2 N \cdot \log( \|x \star y\|_0 \log^2 N) \cdot \log \log \|x \star y\|_0 +\\
& (\|x\|_0 + \|y\|_0) \log\|x \star y \|_0 \log N \cdot (\log \log \|x \star y\|_0)^2 + \\
&\tilde{O}(\log^3 N)\cdot \log (\| x \star y\|_0) \log \log (\|x \ast y\|_0). 
\end{align*}

The running time in the first two lines follows by invoking Lemma \ref{lem:bound} and a straightforward summation over all $O( \log (\|x \star y\|_0) )$ rounds, and the third line is the cost of invoking Lemma \ref{lem:fingerprinting} in every round.
 
The final expression now follows by recalling that $N = 2n$.
\end{proof}

%\thispagestyle{empty}
%\clearpage
%\setcounter{page}{1}

%\newpage
%\clearpage
%\setcounter{page}{1}

%\input{dis}
%\input{trash}
%\input{proofsketch.tex}

%%%% Cut-line between first 10 pages and appendix

%We will not include this in the final submssion.

%\printbibliography[heading=bibintoc,title={References}]
%\section*{References}
%\printbibliography[heading=none]

\addcontentsline{toc}{section}{References}
\bibliographystyle{alpha}
\bibliography{ref}

%{\hypersetup{linkcolor=black}
%\tableofcontents
%}

%\input{hardness}

%%% some writing rules

%% Writing rule for creating tags.
%% Tags :
%% Theorem    \ref{thm:bla_bla}
%% Lemma      \ref{lem:bla_bla}
%% Claim      \ref{cla:bla_bla}
%% Corollary  \ref{cor:bla_bla}
%% Fact       \ref{fac:bla_bla}
%% Definition \ref{def:bla_bla}
%% Section    \ref{sec:bla_bla}
%% Subsection \ref{sub:bla_bla}
%% Equation   \ref{eq:bla_bla}

\end{document}